\numberwithin{equation}{section}
\pgfplotsset{compat=newest}
\pgfplotsset{plot coordinates/math parser=false}
\newlength\figheight
\newlength\figwidth
\newtheorem{thm}{Theorem}[section]
\theoremstyle{definition}
\newtheorem{eg}[thm]{Example}
\theoremstyle{remark}
\newtheorem{rem}[thm]{Remark}
\newcommand{\E}{\mathbb{E}} 
\renewcommand{\P}{\mathbb{P}} 
\newcommand{\F}{\mathcal{F}} 
\DeclarePairedDelimiter{\abs}{\lvert}{\rvert} 
\newcommand{\ts}{\textstyle}
\newcommand{\de}{\,\mathrm{d}}
\begin{document}

\title{Optimal Trading with General Signals and \\ Liquidation in Target Zone Models}

\author{
Christoph Belak\thanks{University of Trier, Department IV -- Mathematics, Universit\"atsring 19, 54296 Trier, Germany, e-mail: \texttt{belak@uni-trier.de}.}
\and
Johannes Muhle-Karbe\thanks{Carnegie Mellon University, Department of Mathematical Sciences, 5000 Forbes Avenue, Pittsburgh, PA 15213, USA, email \texttt{ johannesmk@cmu.edu}. Parts of this research were completed while this author was visiting ETH Z\"urich. He thanks the Forschungsinstitut f\"ur Mathematik and H.~Mete Soner for their hospitality. }
\and
Kevin Ou\thanks{Carnegie Mellon University, Department of Mathematical Sciences, 5000 Forbes Avenue, Pittsburgh, PA 15213, USA, email \texttt{ yangxio@andrew.cmu.edu}.}
}

\date{August 1, 2018}

\maketitle

\begin{abstract}
We study optimal trading in an Almgren-Chriss model with running and terminal inventory costs and  general predictive signals about price changes. As a special case, this allows to treat optimal liquidation in ``target zone models'': asset prices with a reflecting boundary enforced by regulatory interventions. In this case, the optimal liquidation rate is the ``theta'' of a lookback option, leading to explicit formulas for Bachelier or Black-Scholes dynamics.
 
\bigskip
\noindent\textbf{Mathematics Subject Classification: (2010)} 93E20, 91G80, 60H07.

\bigskip
\noindent\textbf{JEL Classification:} G11, C61

\bigskip
\noindent\textbf{Keywords:} Optimal trading, inventory costs, market impact, liquidation, target zone models.

\end{abstract}

\section{Introduction}

In classical models of optimal liquidation, the unaffected asset price is assumed to be a martingale \cite{bertsimas.lo.98,almgren2001optimal,obizhaeva.wang.13,alfonsi.al.10,predoiu.al.11}. This allows to focus on the liquidation program, while abstracting from signals about future price changes. The effect of such signals\footnote{Typical examples include order book imbalances or forecasts of the future order flow of other market participants.} is studied using stochastic control techniques in \cite{cartea.jaimungal.16,lehalle2017incorporating}, leading to a PDE characterization for Markovian signals and explicit formulas in the special case where the signal processes have Ornstein-Uhlenbeck dynamics. 

A rather different signal about future price changes is studied by~\cite{neuman.schied.16}. Motivated by caps on exchange rates enforced by central banks,\footnote{A recent example is the upper bound on the CHF/EUR exchange rate guaranteed by the Swiss National Bank.} they study the optimal liquidation of assets that reflect off an upper threshold.  Under the assumption that the liquidating agent only sells at this most favorable execution price,\footnote{Unlike in the standard optimal execution models surveyed above, the resulting optimal trading rate is singular, since it only acts on the local time of the reflected price process. Accordingly, the liquidity costs in the model of \cite{neuman.schied.16} is imposed on the trading rate in local time.} they characterize the optimal trading strategy by a PDE that admits a probabilistic representation in terms of catalytic superprocesses. 

Selling only at the highest possible price seems reasonable for agents with long liquidation horizons and low inventory costs. Yet, for shorter planning horizons or higher inventory costs, substantial immediate trading is necessary since it becomes too costly to wait for the asset price to approach its maximum.

In the present study, we solve the optimal liquidation problem with a price cap without constraining the selling price. To do so, we first extend the results of \cite{lehalle2017incorporating} on trading with running and terminal inventory costs to general, not necessarily Markovian trading signals by adapting the calculus-of-variations argument developed for optimal tracking problems in~\cite{bank2017hedging,bouchard2017equilibrium}. As a special case, this allows to treat optimal liquidation for the case of reflected price processes: these reduce to computing the ``theta''\footnote{That is, the derivative with respect to the time variable.} of a lookback call option. If the unaffected price process is modelled by a Bachelier or Black-Scholes model, the optimal trading rate can in turn be computed in closed form up to the numerical evaluation of an integral with explicit integrand. These results confirm the intuition outlined above. Indeed, we find that all sales occur close to the barrier if inventory costs are low. In contrast, for higher inventory costs, the influence of the barrier diminishes, as it becomes prohibitively expensive to hold an asset position while the asset price is far from its upper bound.

The remainder of this paper is organized as follows. The general model and the special case of a price cap are introduced in Section~\ref{s:model}. The solution of the general trading problem with inventory costs is subsequently derived in Section~\ref{s:result} and applied to optimal liquidation in models with a price cap in Section~\ref{s:cap}.

\paragraph{Notation}

Throughout, we fix a filtered probability space $(\Omega, \F, \{\F_t\}_{t\in[0,T]}, \P)$ satisfying the usual conditions and write $\E_t[\cdot] := \E[\cdot \vert \F_t]$ for $t\in[0,T]$. The set $\mathcal{H}^2$ denotes the special semimartingales whose canonical decomposition $S=M+A$ into a (local) martingale $M$ and a predictable finite-variation process $A$ satisfies $\E[\langle M\rangle_T] + \E[(\int_0^T \abs{\de A_t})^2] < \infty$. Finally, we write $\mathcal{V}$ for the progressively measurable processes $u$ satisfying $\E[\int_0^T \abs{u_t}^2 \de t] < \infty$.

\section{Model}\label{s:model}

We consider optimal trading in a risky asset with price process $P \in \mathcal{H}^2$. The asset position at time $t \in [0,T]$ is denoted by $X_t$, where the given initial position is $X_0:=x>0$. As in \cite{almgren2001optimal}, the position can only be adjusted gradually, since trades incur a cost $\lambda>0$ quadratic in the selling rate $u_t := -dX_t/dt$. With a running inventory cost $\gamma>0$ and a terminal inventory cost $\Gamma>0$, this leads to the following standard goal functional:\footnote{Note that $V(u)$ is well defined for all $u\in\mathcal V$. As the terminal inventory penalty $\Gamma$ grows, this criterion approaches optimal liquidation, where the position has to be closed out completely at maturity.}
\begin{equation}\label{eq:goal}
 V(u):= \E\Bigg[\underbrace{\int_0^T \overbrace{(P_t - \lambda u_t)}^{\textrm{execution price}} u_t \de t}_{\textrm{terminal cash position}} + \underbrace{P_T X_T}_{\stackrel{\textrm{terminal}}{\textrm{asset position}}} - \underbrace{\int_0^T \gamma X_t^2 \de t}_{\stackrel{\textrm{running}}{\textrm{inventory cost}}} - \underbrace{\Gamma X_T^2}_{\stackrel{\textrm{terminal}}{\textrm{inventory cost}}} \Bigg].
\end{equation}
For asset prices with martingale dynamics, criteria of this type were first introduced by \cite{almgren.12,forsyth.al.12} and subsequently studied by, e.g., \cite{schied.13,ankirchner.al.14,graewe.al.15}. If the asset price is of the form $\de P_t = I_t \de t + \de M_t$ for a one-dimensional diffusion $I$, then~\eqref{eq:goal} corresponds to the setup of \cite{lehalle2017incorporating}.  Here, we allow for more general -- potentially singular -- asset dynamics. This allows to cover the ``target zone models'' studied in \cite{neuman.schied.16}, where the asset price is capped at some finite level: 

\begin{eg}\label{ex:target}
Consider a martingale $M\in\mathcal{H}^2$ and a constant $\bar{P} \geq M_0$. Then the price ``capped'' at level $\bar{P}$ is defined as the solution of the Skorokhod map 
$$P:=M-(M^*-\bar{P})^+,$$
where $M^*_t:=\sup_{s\in[0,t]} M_s$. This corresponds to the minimal amount of intervention necessary to keep the asset price below level $\bar{P}$, akin to regulatory interventions to keep an exchange rate below a certain threshold.
\end{eg}

\section{General Solution}\label{s:result}

In our general - not necessarily Markovian - setup, the dynamic programming approach of \cite{lehalle2017incorporating} is no longer applicable. Instead, we adapt the calculus-of-variation argument of \cite{bank2017hedging,bouchard2017equilibrium} to the present setting, where the asset has a general -- possibly singular -- drift. 

As the goal functional $u\mapsto V(u)$ is strictly concave, it has a unique maximum $\hat{u}$ characterized by the first-order condition that the G\^ateaux derivative $V'(\hat{u})$ vanishes at this critical point. This leads to a system of linear forward-backward stochastic differential equations (FBSDEs) for the optimal trading rate and the position, which can be solved explicitly:

\begin{thm}\label{main}
Set 
$$\beta:=\sqrt{\gamma/\lambda}, \qquad G(t):= \beta\cosh(\beta t)+\lambda^{-1}\Gamma\sinh(\beta t).
$$
Let $P=P_0+M+A$ be the canonical decomposition of the (special) semimartingale $P$ into a local martingale $M$ and a finite-variation process $A$, and define 
\begin{align*}
v_2(t):= 
-\frac{G'(T-t)}{G(T-t)}, \quad
v_1(t):= \E_t\left[ \frac{1}{2\lambda}\int_t^T \frac{G(T-s)}{G(T-t)} \de A_s \right], \quad
v_0(t):= \E_t\left[ \int_t^T v_1(s)^2 \de s \right].
\end{align*}
The unique maximizer $\hat{u}$ of $u\mapsto V(u)$ over $\mathcal V$ solves the (random) linear differential equation
\begin{align}\label{eq:ODE}
&\hat{u}_t = -v_2(t) X^{\hat{u}}_t - v_1(t),
\end{align}
so that the optimal liquidation trajectory is given by
\begin{align}\label{eq:Pos}
&X^{\hat{u}}_t = \frac{G(T-t)}{G(T)}x + \int_0^t \frac{G(T-t)}{G(T-s)} v_1(s) \de s.
\end{align}
The corresponding optimal value for~\eqref{eq:goal} is
\begin{align}\label{eq:vf}
&\ts V(\hat{u}) = P_0 x + \lambda\left[v_0(0) + 2 v_1(0) x + v_2(0) x^2\right].
\end{align}
\end{thm}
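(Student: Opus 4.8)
\emph{Plan.} Since $u\mapsto V(u)$ is a strictly concave and coercive quadratic functional on $\mathcal V$ — the summand $-\lambda\E[\int_0^T u_t^2\de t]$ is strictly concave and coercive, $\int_0^T P_tu_t\de t+P_TX_T$ is affine in $u$, and $-\gamma\E[\int_0^T X_t^2\de t]-\Gamma\E[X_T^2]$ is concave — it admits a unique maximizer $\hat u$, characterised by the vanishing of the G\^ateaux derivative $V'(\hat u)$. The plan is: (i) compute $V'(u)$ and rewrite $V'(\hat u)=0$ as a linear FBSDE; (ii) solve that FBSDE explicitly via a linear-feedback ansatz, recovering $v_1,v_2$ and~\eqref{eq:Pos}; (iii) substitute $\hat u$ back into $V$ to obtain~\eqref{eq:vf}. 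For step (i), fix $w\in\mathcal V$ and put $W_t:=\int_0^t w_s\de s$, so that $X^{u+\epsilon w}=X^u-\epsilon W$. Differentiating $\epsilon\mapsto V(u+\epsilon w)$ at $\epsilon=0$ (dominated convergence justifies the interchange, using $u,w\in\mathcal V$ and $P\in\mathcal H^2$) gives
$$V'(u)[w]=\E\Big[\int_0^T(P_t-2\lambda u_t)w_t\de t-P_TW_T+2\gamma\int_0^T X^u_tW_t\de t+2\Gamma X^u_TW_T\Big].$$
Substituting $W_t=\int_0^t w_s\de s$, reversing the order of the $\de s$- and $\de t$-integrations, and conditioning (so e.g. $\E[P_TW_T]=\E[\int_0^T\E_t[P_T]\,w_t\de t]$), the requirement $V'(\hat u)[w]=0$ for all $w\in\mathcal V$ becomes, $\de t\otimes\P$-a.e.,
$$2\lambda\hat u_t=\big(P_t-\E_t[P_T]\big)+2\gamma\,\E_t\Big[\int_t^T X^{\hat u}_s\de s\Big]+2\Gamma\,\E_t\big[X^{\hat u}_T\big],$$
and since $P_t-\E_t[P_T]=A_t-\E_t[A_T]$ and $X^{\hat u}_t=x-\int_0^t\hat u_s\de s$, this is a linear FBSDE in $(\hat u,X^{\hat u})$. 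As the manipulations are equivalences and $V$ is concave, any \emph{admissible} solution of this FBSDE is the maximizer, so it suffices to exhibit one solution and check it lies in $\mathcal V$.

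For step (ii), adding $2\gamma\int_0^t X^{\hat u}_s\de s-A_t$ to both sides turns the right-hand side of the FBSDE into the martingale $\mathcal M_t:=\E_t[-A_T+2\gamma\int_0^T X^{\hat u}_s\de s+2\Gamma X^{\hat u}_T]$, giving $2\lambda\hat u_t=A_t-2\gamma\int_0^t X^{\hat u}_s\de s+\mathcal M_t$, whose terminal value supplies the boundary condition $\lambda\hat u_T=\Gamma X^{\hat u}_T$. I would then try the feedback form $\hat u_t=-v_2(t)X^{\hat u}_t-v_1(t)$ with $v_2$ deterministic and $v_1$ adapted: using $\dot X^{\hat u}_t=v_2(t)X^{\hat u}_t+v_1(t)$, the terms proportional to $X^{\hat u}$ force the Riccati equation $v_2'=\gamma/\lambda-v_2^2$ with $v_2(T)=-\Gamma/\lambda$, which one verifies is solved by $v_2(t)=-G'(T-t)/G(T-t)$ by checking $G''=\beta^2G$. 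The remaining terms give a linear first-order equation for $v_1$ with $v_1(T)=0$; solving it by variation of constants and then applying $\E_t[\cdot]$ — which annihilates the stochastic integral against $\mathcal M$, its integrand being deterministic — yields exactly the stated $v_1$. Solving the scalar linear ODE~\eqref{eq:ODE} for $X^{\hat u}$ by variation of constants then gives~\eqref{eq:Pos}, and admissibility $\hat u\in\mathcal V$ follows from the $\mathcal H^2$-bound on $A$, boundedness of $G$ and $1/G$ on $[0,T]$, and Doob's maximal inequality.

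For step (iii), integration by parts yields $\int_0^T P_t\hat u_t\de t=P_0x-P_TX^{\hat u}_T+\int_0^T X^{\hat u}_t\de P_t$, and the stochastic integral against $M$ is a true martingale (since $M\in\mathcal H^2$ and $\E[\sup_t(X^{\hat u}_t)^2]<\infty$), so that
$$V(\hat u)=P_0x+\E\Big[\int_0^T X^{\hat u}_t\de A_t\Big]-\lambda\E\Big[\int_0^T\hat u_t^2\de t\Big]-\gamma\E\Big[\int_0^T(X^{\hat u}_t)^2\de t\Big]-\Gamma\E\big[(X^{\hat u}_T)^2\big].$$
Applying It\^o's formula to $\Psi_t:=\lambda\big(v_0(t)+2v_1(t)X^{\hat u}_t+v_2(t)(X^{\hat u}_t)^2\big)$ and inserting the ODE for $X^{\hat u}$, the Riccati equation for $v_2$, and the semimartingale dynamics of $v_1$ (carrying a $-\tfrac1{2\lambda}\de A_t$ term) and of $v_0$ (carrying a $-v_1(t)^2\de t$ term), the drift should collapse to $\big(\lambda\hat u_t^2+\gamma(X^{\hat u}_t)^2\big)\de t-X^{\hat u}_t\de A_t$ plus a true martingale. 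Since $v_0(T)=v_1(T)=0$ and $v_2(T)=-\Gamma/\lambda$ give $\Psi_T=-\Gamma(X^{\hat u}_T)^2$ while $\Psi_0=\lambda(v_0(0)+2v_1(0)x+v_2(0)x^2)$, taking expectations yields $\E[\int_0^T(\lambda\hat u_t^2+\gamma(X^{\hat u}_t)^2)\de t]=\E[\int_0^T X^{\hat u}_t\de A_t]-\Gamma\E[(X^{\hat u}_T)^2]-\E[\Psi_0]$, and feeding this into the previous display delivers~\eqref{eq:vf}.

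The main difficulty I expect is step (ii): picking the correct feedback form and running the variation-of-constants computation so that the conditional-expectation formula for $v_1$ drops out cleanly, all while keeping the possibly \emph{singular} finite-variation part $A$ — the sole origin of $v_1$ — under rigorous control. This last point also covers justifying the Fubini/tower manipulations in step (i), the pathwise solvability of the random linear ODEs with merely $L^2$-coefficients, and verifying the admissibility $\hat u\in\mathcal V$ of the candidate. Step (iii), by contrast, is a bookkeeping-heavy but routine It\^o computation once the algebraic identities $G''=\beta^2G$ and $v_2'=\gamma/\lambda-v_2^2$ are in hand.
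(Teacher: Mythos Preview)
Your proposal is correct and shares the paper's overall architecture---G\^ateaux derivative $\Rightarrow$ linear FBSDE $\Rightarrow$ explicit solution $\Rightarrow$ evaluation of $V(\hat u)$---but diverges from the paper in the two substantive steps. For the FBSDE, the paper does \emph{not} guess a feedback ansatz and derive a Riccati equation; instead it writes the system as $\de Y_t=BY_t\de t+\tfrac1{2\lambda}\de Z_t$ with $Y=(X,u)^\top$ and $B=\bigl(\begin{smallmatrix}0&-1\\-\gamma/\lambda&0\end{smallmatrix}\bigr)$, integrates via the matrix exponential $e^{B(T-t)}$, and then contracts with the row vector $(\Gamma/\lambda,-1)$ that encodes the terminal condition; the functions $G$ and $G'$ appear as the entries of $(\Gamma/\lambda,-1)e^{B(T-t)}$, so the feedback form~\eqref{eq:ODE} drops out without solving a Riccati ODE. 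Your Riccati route is equally valid and is the standard LQ-control derivation; it arguably generalises more transparently (e.g.\ to time-dependent $\gamma,\lambda,\Gamma$) at the cost of having to guess the ansatz. For the value function, the paper avoids your It\^o computation on $\Psi_t$ altogether: it exploits the first-order condition at $\alpha=\hat u$, namely $\langle V'(\hat u),\hat u\rangle=0$, together with~\eqref{eq:foc} at $t=0$, to reduce $V(\hat u)$ immediately to $\tfrac12\E[\int_0^TP_t\hat u_t\de t+P_TX^{\hat u}_T]+\tfrac12x(-2\lambda\hat u_0+P_0)$, after which a single integration by parts and Fubini finish. This is shorter and slicker than your verification argument, though your approach is the textbook one and has the advantage of not relying on optimality a second time.
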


\begin{proof}
We adapt the argument from \cite{bank2017hedging,bouchard2017equilibrium}. Recall that $X_t = x-\int_0^t u_s \de s$. Therefore, $X$ is an affine function of $u$. As the goal functional~\eqref{eq:goal} is a quadratic in $(u,X)$ with strictly negative quadratic coefficients, it admits a unique maximizer characterized by the critical point; see, e.g.,~\cite{ekeland1999convex}. We now solve for this critical point in feedback form.

\emph{Step 1: Compute the G\^ateaux derivative}. We fix a direction of variation $\alpha \in \mathcal{V}$ and compute
\begin{align*}
&\langle V'(u), \alpha \rangle = \lim_{\varepsilon \to 0}\tfrac{1}{\varepsilon}(V(u+\epsilon\alpha)-V(u))  \\
&= \E\left[\int_0^T P_t \alpha_t \de t - 2\lambda\int_0^T u_t \alpha_t \de t - P_T \int_0^T \alpha_t \de t + 2\gamma\int_0^T X_t \int_0^t \alpha_s \de s \de t + 2\Gamma X_T \int_0^T \alpha_t \de t \right] \\
&= \E\left[\int_0^T \alpha_t \E_t\left[P_t - 2\lambda u_t - P_T + 2\gamma\int_t^T X_s \de s + 2\Gamma X_T \right] \de t\right].
\end{align*}
This derivative has to vanish for any variation at the critical point $\hat{u}$, which is equivalent to
\begin{equation}\label{eq:foc}
\E_t\left[-2\lambda \hat{u}_t + P_t - P_T + 2\gamma\int_t^T X^{\hat{u}}_s \de s + 2\Gamma X^{\hat{u}}_T \right] = 0.
\end{equation}
We therefore obtain that the optimal trading rate $\hat{u}$ and the corresponding optimal position $X^{\hat{u}}$ solve the following system of linear forward-backward stochastic differential equations (FBSDE):
\begin{equation}\label{eq:FBSDE}
\begin{split}
\de X^{\hat{u}}_t &= - \hat{u}_t \de t, \quad X_0=x, \\
\de\hat{u}_t &= \frac{1}{2\lambda}\left(\de P_t-2\gamma X^{\hat{u}}_t \de t - \de N_t\right), \quad \hat{u}_T=\frac{\Gamma}{\lambda} X^{\hat{u}}_T.
\end{split}
\end{equation}
Here, $N$ is a square-integrable martingale that needs to be determined as part of the solution. 

\emph{Step 2: Solve the FBSDE for the critical point.} 
Setting,
\[
 Y:=\begin{pmatrix} X \\ u\end{pmatrix},\qquad Z:=\begin{pmatrix} 0 \\ P-N\end{pmatrix},\qquad B:=\begin{pmatrix} 0 & -1 \\ -\gamma/\lambda & 0\end{pmatrix},
\]
the FBSDE~\eqref{eq:FBSDE} can be written in vector form as 
\[
\de Y_t = B Y_t \de t + \frac{1}{2\lambda}\de Z_t, \quad Y^1_0=x, \quad (\Gamma/\lambda, -1)Y_T = 0.
\]
Integration by parts shows that $\de (e^{-B t}Y_t)=\frac{1}{2\lambda}e^{-B t}\de Z_t$ and in turn
\begin{align*}
Y_T &= e^{B(T-t)}Y_t+\frac{1}{2\lambda}\int_t^T e^{B(T-s)}\de Z_s.
\end{align*}
Now, multiply by $(\Gamma/\lambda,-1)$, take into account the terminal condition $(\Gamma/\lambda, -1)Y_T = 0$, and use  
$$
(\Gamma/\lambda, -1)e^{B(T-t)}=\begin{pmatrix} \frac{\Gamma}{\lambda}\cosh(\beta (T-t))+\beta \sinh(\beta(T-t)) \\ -\cosh(\beta(T-t))-\frac{\Gamma}{\lambda}\beta^{-1} \sinh(\beta(T-t))\end{pmatrix}=\frac{1}{\beta}\begin{pmatrix} G'(T-t) \\ -G(T-t) \end{pmatrix}.
$$
As a consequence,
$$
0=G'(T-t)X^{\hat{u}}_t-G(T-t)\hat{u}_t-\frac{1}{2\lambda}\int_t^T G'(T-s) \de (P_s-N_s).
$$
After solving for the trading rate and taking conditional expectations, this gives
$$
\hat{u}_t = -\frac{G'(T-t)}{G(T-t)} X^{\hat{u}}_t - \frac{1}{2\lambda} \E_t\left[\int_t^T\frac{G(T-s)}{G(T-t)} \de P_s\right].
$$
By the Doob-Meyer decomposition and since $P \in \mathcal{H}^2$, we can replace $\de P_s$ with $\de A_s$. The variation of constants formula now yields the explicit formula~\eqref{eq:Pos} for the corresponding optimal position $X^{\hat{u}}$. Since both $G$ and $G'$ are bounded from above and below away from zero, we have $\E[\abs{\hat{u}_t}^2] \le C_1 +C_2\int_0^t \E[\abs{\hat{u}_s}^2] \de s$ for some $C_1, C_2 > 0$. Gronwall's lemma in turn shows that $\E[\abs{\hat{u}_t}^2]\le C_1 e^{C_2 T}$ and hence $\hat{u}\in\mathcal V$ by Fubini's theorem.

\emph{Step 3: Compute the value function.}
The first-order condition $0=\langle V'(\hat{u}), \alpha \rangle$ for $\alpha=\hat{u}$ and its consequence~\eqref{eq:foc} for $t=0$ imply
\begin{align*}
V(\hat{u}) &= \frac{1}{2}\E\left[\int_0^T P_t \hat{u}_t \de t + P_T X^{\hat{u}}_T \right] + \frac{1}{2}x(-2\lambda \hat{u}_0 + P_0).
\end{align*}
Integration by parts as well as the Formulas~\eqref{eq:ODE} for $\hat{u}_0$ and \eqref{eq:Pos} for $X^{\hat{u}}_t$ in turn show that
\begin{align*}
V(\hat{u}) &=\frac{1}{2}\E\left[xP_0 + \int_0^T X^{\hat{u}}_t \de P_t \right] + \frac{1}{2}x(2\lambda v_2(0)x + 2\lambda v_1(0) + P_0)\\
&= xP_0 + \lambda(v_2(0)x^2+v_1(0)x) + \frac{1}{2}\E\left[\int_0^T \left(\frac{G(T-t)}{G(T)}x + \int_0^t \frac{G(T-t)}{G(T-s)}v_1(s) \de s\right) \de P_t\right].
\end{align*}
Since the price process $P \in \mathcal{H}^2$ can be replaced by its finite-variation part $A$ in the right-most expectation, the asserted form of the optimal value now follows from Fubini's theorem and the definition of $v_1(t)$. 
\end{proof}

\begin{rem}
The optimal trading rate~\eqref{eq:Pos} consists of two parts. The first prescribes to sell the initial position at a deterministic rate to reduce inventory. The second exploits signals about future price changes as summarized by a discounted conditional expectation of the asset's predictable drift. If the asset price $P$ is a martingale, the second term disappears and we obtain the classical optimal liquidation result of \cite{almgren2001optimal}. 
\end{rem}

\begin{rem}
The value function~\eqref{eq:vf} consists of two parts: $xP_0$ is the mark-to-market value of the position. The second part, $\lambda(v_0(0)+2v_1(0)x+v_2(0)x^2)$, is the expected risk-adjusted implementation shortfall of the meta order under the optimal strategy.
\end{rem}

\section{Solution for Target-Zone Models}\label{s:cap}

We now apply Theorem~\ref{main} to the target-zone model of Example~\ref{ex:target}, where
$$
P_t=M_t -(M^*_t-\bar{P})^+, \quad t \in [0,T],
$$
for a martingale $M\in\mathcal{H}^2$, its running maximum $M^*_t=\max_{s \in [0,t]}M_s$ and a constant price cap $\bar{P} \geq M_0$. The key to applying Theorem~\ref{main} is to compute the conditional expectation of future price changes. For the target zone models, we have
\begin{align*}
\E_t[P_s] = \E_t\left[M_s-(M^*_s-\bar{P})^+\right] &= M_t-\E_t\left[(M^*_s-\bar{P})^+\right]\\
& =M_t-(M^*_t-\bar{P})^+ - \E_t\left[(M^*_s-\bar{P})^+ - (M^*_t-\bar{P})^+\right]\\
&=P_t - \E_t\left[(M^*_s-\bar{P}\vee M^*_t)^+\right], \quad 0 \leq t \leq s \leq T.
\end{align*}
Thus, $\de A_s=\de_s \E_t[P_s] = -\de_s L_s(t)$, where $L_s(t):=\E_t[(M^*_s-\bar{P}\vee M^*_t)^+]$ is the price at time $t \in [0,T]$ of a lookback call option on $M$ with fixed strike $\bar{P}\vee M^*_t$ and maturity $s \in [t,T]$ . We have therefore reduced the computation of the optimal trading strategies from Theorem~\ref{main} to the calculation of the ``theta'' of a lookback call option written on the uncapped asset price. If the uncapped asset price follows arithmetic or geometric Brownian motions, the joint distribution of Brownian motion and its running maximum can in turn be used to compute the optimal trading strategy explicitly.

\subsection{Bachelier Model}
Suppose that $M_t:=M_0 + \sigma B_t$, where $B$ is a standard one-dimensional Brownian motion, $M_0\in\mathbb R$ and $ \sigma>0$ are constants, so that $L_s(t)$ is the price of a lookback call in the Bachelier model. A straightforward calculation shows that
\[
\de_s L_s(t) = \frac{\sigma}{\sqrt{s-t}}\phi\left(\frac{\bar{P}-P_t}{\sigma\sqrt{s-t}}\right)\de s,\qquad\text{where } \phi(x):=\frac{1}{\sqrt{2\pi}}e^{-\frac{1}{2}x^2}.
\]
Thus, the optimal trading rate from Theorem~\ref{main} is $\hat{u}_t = \bar{u}(t,X^{\hat{u}}_t,P_t)$, where
\begin{equation}\label{eqn:RateBachelier}
\bar{u}(t,x,p) := \frac{G'(T-t)}{G(T-t)} x +\frac{1}{2\lambda} \int_t^T \frac{\sigma}{\sqrt{s-t}} \frac{G(T-s)}{G(T-t)} \phi\left(\frac{\bar{P}-p}{\sigma\sqrt{s-t}}\right) \de s 
\end{equation}
and where the constant $\beta$ and the function $G$ are defined as in Theorem~\ref{main}. Setting
\[
 \bar{u}_{\mathrm{AC}}(t,x) := \frac{G'(T-t)}{G(T-t)} x\qquad\text{and}\qquad \bar{u}_{\mathrm{BA}}(t,p) := \bar{u}(t,0,p) = \bar{u}(t,x,p) -  \bar{u}_{\mathrm{AC}}(t,x),
\]
we observe that $\bar{u}_{\mathrm{AC}}$ is the optimal trading speed in the absence of a price cap, cf.~\cite{almgren2001optimal}, which does not depend on the current asset price. In contrast, with a price cap, the optimal trading speed also depends on the distance of the current asset price $P_t$ from the cap $\bar{P}$ through $\bar{u}_{\mathrm{BA}}$. Since $\bar{u}_{\mathrm{BA}}\ge 0$, the position is liquidated at a higher rate if a price cap is present.  

\begin{figure}[ht]
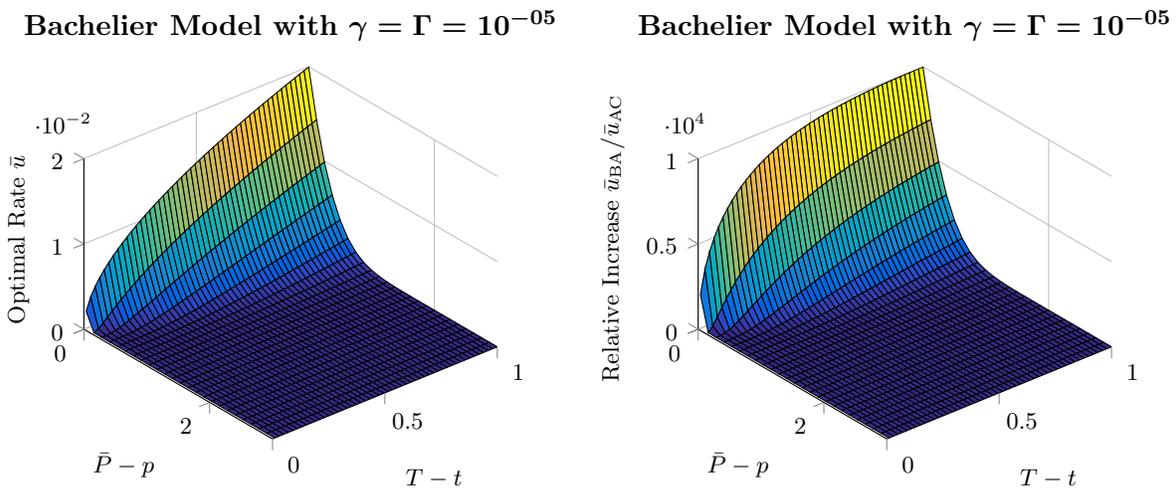

 \setlength\figheight{0.3\linewidth}
 \setlength\figwidth{0.35\linewidth}
 \centering
 \input{figABM_1e-05.tikz}\hspace{0.5em}
 \input{figDiffABM_1e-05.tikz}
 \caption{Optimal trading rate (left) and relative increase of the trading rate over the Almgren-Chriss rate (right) in the Bachelier model in the low inventory regime $\gamma = \Gamma = 10^{-5}$. The other model parameters are  $T = 1$, $x = 1$, $\lambda = 0.1$, $\sigma = 0.5$.}
 \label{fig:Bachelier:low}
\end{figure}

The additional rate $\bar{u}_{\mathrm{BA}}(t,p)$ is decreasing in $\bar P-p$ and maximized at $p = \bar P$, i.e., most additional trading happens when the current asset price is near the price cap. In \cite{neuman.schied.16}, it is assumed that the asset is only sold when its price $p$ coincides with the price cap $\bar P$. Our unconstrained solution shows that this assumption is justified for small inventory costs $\gamma,\Gamma$. Indeed, suppose for simplicity that $\gamma = \Gamma$, so that the function $G$ depends on the model parameters only through $\beta = \sqrt{\gamma/\lambda}$:
\[
 G(T-t) = G(T-t;\beta) = \beta\cosh(\beta(T-t)) + \beta^2 \sinh(\beta(T-t)).
\]
One then immediately verifies that
\[
 \lim_{\beta\downarrow 0} \bar u_{\mathrm{AC}}(t,x;\beta) = 0\qquad\text{and}\qquad \lim_{\beta\downarrow 0} \bar u_{\mathrm{BA}}(t,p;\beta) = \frac{1}{2\lambda} \int_t^T \frac{\sigma}{\sqrt{s-t}} \phi\left(\frac{\bar{P}-p}{\sigma\sqrt{s-t}} \right) \de s.
\]
Whence, for small inventory costs $\gamma=\Gamma$, the optimal trading rate is largely determined by $\bar{u}_{\mathrm{BA}}$. In particular, the trader sells at a high rate if the asset price $p$ is close to the price cap $\bar P$, whereas the trading rate vanishes as the difference $\bar P - p$ becomes large. For  $\gamma = \Gamma = 10^{-5}$ and 
\[
 T = 1,\qquad x = 1,\qquad \lambda = 0.1,\qquad \sigma = 0.5,
\]
this is illustrated in Figure~\ref{fig:Bachelier:low}. There, we plot the optimal trading rate $\bar u$ and the relative increase $(\bar u - \bar u_{\mathrm{AC}}) / \bar u_{\mathrm{AC}}= \bar u_{\mathrm{BA}} / \bar u_{\mathrm{AC}}$ of the optimal rate compared to the Almgren-Chriss solution as functions of length of the liquidation period $T-t$ and moneyness $\bar P - p$.

We observe that the optimal trading rate in this case is almost equal to zero if the asset price $p$ is away from the price cap $\bar P$, i.e., the asset is only sold near the price cap. In particular, at this critical level $\bar P$, the trading rate with price cap is up to $10^4$ times higher than the trading rate without price cap. Therefore, in this low inventory cost regime, only allowing trades at $p = \bar P$ as in \cite{neuman.schied.16} is a reasonable approximation.

The corresponding results for higher inventory costs $\gamma=\Gamma=1$ are reported in Figure~\ref{fig:Bachelier:moderate}. (All the other model parameters are the same as for the previous example.) We observe that the optimal rate is decreasing as a function of moneyness $\bar P-p$ and increasing as a function of the liquidation period $T-t$. The plot of the relative increase $\bar u_{\mathrm{BA}} / \bar u_{\mathrm{AC}}$ shows that as the asset price $p$ approaches the price cap $\bar P$, there is an increase in the liquidation rate of initially more than 30\%. This additional effect is more pronounced if the trading horizon $T-t$ is large and vanishes for small liquidation periods. However, except for small values of moneyness, the qualitative shape of the optimal rate is for the most part determined by the Almgren-Chriss rate $\bar u_{\mathrm{AC}}$. In other words, unless the asset price is close to the price cap, the trader with higher inventory costs essentially neglects its presence.

\begin{figure}[ht]
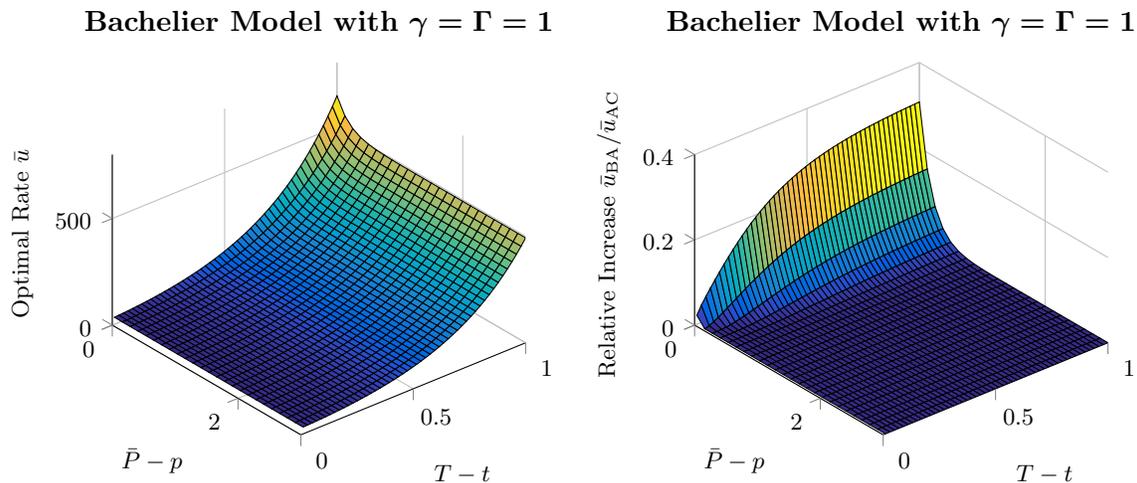

 \setlength\figheight{0.3\linewidth}
 \setlength\figwidth{0.35\linewidth}
 \centering
 \input{figABM_1.tikz}\hspace{0.5em}
 \input{figDiffABM_1.tikz}
 \caption{Optimal trading rate (left) and relative increase of the trading rate over the Almgren-Chriss rate (right) in the Bachelier model in the moderate inventory regime $\gamma = \Gamma = 1$. The other model parameters are  $T = 1$, $x = 1$, $\lambda = 0.1$, $\sigma = 0.5$.}
 \label{fig:Bachelier:moderate}
\end{figure}

\subsection{Black-Scholes Model}
Now suppose that the uncapped asset price $M$ follows a geometric Brownian motion, $M_t:=M_0\exp\left(\sigma B_t-\frac{1}{2}\sigma^2 t\right)$ for a standard Brownian motion $B$ and constants $S_0, \sigma >0$. Then, the computation of the optimal selling rate boils down to the computation of the theta of a lookback call in the Black-Scholes model. A standard calculation shows
\[
\de_s L_s(t) = M_t\left[ \frac{\sigma}{\sqrt{s-t}}\phi\bigl(f(s-t,M_t,P_t)\bigr)+\frac{\sigma^2}{2}\Phi\bigl(f(s-t,M_t,P_t)\bigr)\right] \de s,
\]
where $\Phi$ denotes the cumulative distribution function of the standard normal law, and
\[
 f(u,m,p) := \frac{\sigma\sqrt{u}}{2} - \frac{1}{\sigma\sqrt{u}}\log\left(\frac{\bar{P}-p}{m}+1\right).
\]
Therefore, the optimal liquidation rate $\hat{u}$ is 
\[
 \hat{u}_t = \bar{u}_{\mathrm{AC}}(t,X^{\hat{u}}_t) + \bar{u}_{\mathrm{BS}}(t,M_t,P_t),
\]
where $\bar{u}_{\mathrm{AC}}$ is defined as in the Bachelier model and
\[
 \bar{u}_{\mathrm{BS}}(t,m,p) := \frac{m}{2\lambda}\int_t^T  \frac{G(T-s)}{G(T-t)} \left[ \frac{\sigma}{\sqrt{s-t}}\phi\left(f(s-t,m,p)\right)+\frac{1}{2}\sigma^2 \Phi\bigl(f(s-t,m,p)\bigr)\right] \de s.
\]
As in the Bachelier model, the trader liquidates the position at a higher rate if a price cap is present since $\bar{u}_{\mathrm{BS}}\ge 0$, and the effect is more pronounced for small trading costs $\lambda$. Moreover, we have
\[
 \frac{\de }{\de f} \Bigl[\frac{\sigma}{\sqrt{s-t}}\phi\left(f\right)+\frac{1}{2}\sigma^2 \Phi\bigl(f\bigr)\Bigr] = \frac{\sigma}{\sqrt{s-t}}\phi(f)\Bigl[\frac{\sigma\sqrt{s-t}}{2} - f\Bigr],
\]
from which we infer that $\bar{u}_{\mathrm{BS}}$ is decreasing in $\bar P - p$ since $f(s-t,m,p) \leq \sigma\sqrt{s-t}/2$ and $f$ is clearly decreasing in $\bar P - p$. This is again in line with our findings in the Bachelier model. In contrast to the Bachelier model, however, the additional rate $\bar{u}_{\mathrm{BS}}$ also depends on the uncapped asset price $M_t$ and the dependence is monotonically increasing. Nevertheless, the numerical results in the Black-Scholes model are qualitatively very similar to their counterparts for the Bachelier model, compare~\cite{forsyth.al.12}; we therefore do not report them here.

\bibliographystyle{plainnat}
\phantomsection
\addcontentsline{toc}{section}{\refname}
\bibliography{References}

\end{document}